\newtheorem{defn}{Definition}
\newtheorem{thm}{Theorem}
\newtheorem{lemma}{Lemma}
\title{{\huge Supporting Information for}\\ \vspace{0.4cm}
{\Large \bf Quantifying the benefits of vehicle pooling\\ with shareability networks}}
\author{Paolo Santi, Giovanni Resta, Michael Szell, Stanislav Sobolevsky, \\Steven Strogatz \& Carlo Ratti
}
\begin{document}

\date{}

\maketitle 


\noindent In this supporting information we present the detailed methods, including the handling of the data set, the formal derivation of the network-based approach used to quantify the benefits of a shared taxi system, and essential extended details of the analysis. 

\subsection*{Data set and pre-processing}
The data set contains origin-destination data of all 172 million trips with passengers of all 13,586 taxicabs in New York during the calendar year of 2011. Each vehicle is associated with a license, a so-called \emph{medallion}, which is synonymously used as a name for the vehicles. These medallion taxis are the only vehicles in the city permitted to pick up passengers in response to a street hail. A medallion may be purchased from the City at infrequent auctions, or from another medallion owner. Because of their high prices medallions and most cabs are owned by investment companies and are leased to drivers. There are 39,437 unique driver IDs in the data set, which corresponds to 2.9 drivers per medallion on average. Note that we unfortunately do not have explicit information on the number of passengers per vehicle, however, following the data reported by \cite{FactBook}, that the average daily number of passengers served by NY taxis is 600,000, with 450,000 trips on average, the average number of passengers per trip is around 1.3. The data set contains a number of fields from which we use the following: medallion ID, origin time, destination time, origin longitude, origin latitude, destination longitude, destination latitude. Times are accurate to the second, positional information has been collected via Global Positioning System (GPS) technology by the data provider. Out of our control are possible biases due to urban canyons which might have slightly distorted the GPS locations during the collection process \cite{grush2008cam}.
All IDs are given in anonymized form, origin and destination values refer to the origins and destinations of trips, respectively. 

For creating the street network of Manhattan we used data from openstreetmap.org. We filtered the streets of Manhattan, selecting only the following road classes: primary, secondary, tertiary, residential, unclassified, road, living street. Several other classes were deliberately left out, such as footpaths, trunks, links or service roads, as they are unlikely to contain delivery or pickup locations. Next we extracted the street intersections to build a network in which nodes are intersections and directed links are roads connecting those intersections (we use directed links because a non-negligible fraction of streets in Manhattan are one-way). The extracted network of street intersections was then manually cleaned for obvious inconsistencies or redundancies (such as duplicate intersection points at the same geographic positions), in the end containing 4091 nodes and 9452 directed links. This network was used to map-match the GPS locations from the trip data set. We only matched locations for which a closest node in the street intersection network exists with a distance less than than $100\,\mathrm{m}$. We matched GPS points to street intersections rather than to points on the closest street segments as a reasonable compromise between high accuracy (the average length of street segments in Manhattan is $126\,\mathrm{m}$) and granularity of discretization that is mainly relevant in the estimation of travel times, see SI Section \emph{Computing travel times}. Finally, from the remaining 150 million trips we discarded about 2 million trips that had identical starting and end points, and trips that lasted less than one minute.

\subsection*{Static and dynamic implementations, and the relevance of empty trips}
Before addressing the aim of this work of developing a theoretical framework for the rigorous quantification and optimization of general spatio-temporal sharing problems, in particular of trip sharing in taxi systems, we draw in this section the connection to the corresponding practical issues that come along with concrete implementations of such systems. A street hailing based, conventional taxi system, ideally features taxis that i) when empty, try to find a passenger as fast as possible by choosing an optimal passenger-finding strategy \cite{yamamoto2008adaptive,Li11}, ii) when occupied, deliver passengers as fast as possible to their destination. For taxi operations that are not based on street hailing but on trip queries, as facilitated by modern mobile phone apps and services, the sub-problem i) becomes an issue of efficiently locating, scheduling and dispatching the closest empty taxis. 

In conventional taxi systems, the problem of dispatching taxis to a trip request is conceptually straightforward: Find the closest empty taxi -- where the metric for closeness can involve vehicle velocities, traffic conditions \cite{yuan2011driving}, etc. In dynamic approaches to taxi sharing however, in which taxis are allowed to re-route and to pick up new passengers on the fly, the concrete operational issues of the detailed spatial query setup and the communication protocol design between taxis and dispatch system becomes potentially intricate. This set of problems has been satisfactorily solved by a recently published service model called T-share \cite{ma2013t}. Using empirical data and simulations, the study has pointed out the impressive potential of a heuristic, dynamic approach, and has provided efficient and scalable algorithms to solve the taxi searching and scheduling sub-problems which occur when taxis are allowed to change their routes on the fly. It found that a dynamically scheduled taxi system is able to handle ridesharing, allowing in the city of Beijing to service 25\% additional taxi users while saving 13\% travel distance compared to services without ridesharing. 

As we derive formally in the following sections, it is already possible to achieve high levels of trip sharing and extraordinary benefits with simple static implementations rather than with dynamic ones. A static perspective additionally allows rigorous quantification of the benefits using provably optimal algorithms from graph theory. Such a static implementation of taxi sharing can be imagined as follows, from the perspective of a passenger: 1) Submit trip request source and destination to the central system, for example via mobile phone app, 2) wait up to $\delta=1\,\mathrm{min}$ for the system to respond with a sharing option presenting the information of estimated arrival time without sharing and how, due to sharing, the trip might be prolonged by a time up to $\Delta$ (or equivalently, the passenger is presented a time window of possible arrival times), 3) either confirm or deny the presented sharing option. After the $k$ passengers have confirmed their participation, the shared trip becomes one assignment of bundled, unchangeable requests with a well-defined route having a starting point, intermediate points, and an end point, which can be handled by any conventional taxi dispatch system. Only at this step do the locations of the surrounding taxis become relevant. Therefore, a static implementation builds directly on existing taxi systems without the need to reconsider the dispatch process.

Since such a static implementation of a shared taxi system can be regarded as independent of the problem of dispatching and can build directly on top of existing dispatching solutions, it is not in the scope of our work to revisit implementation issues concerning search and scheduling of empty taxis. Still, it is helpful to understand quantitatively the relevance of empty trips in the general picture of taxi systems since one could suspect that the cruising of empty taxis might be the main source of wasting energy in terms of the effective mile per gallon. For this reason, we measured the distribution of durations of occupied trips versus the durations in-between occupied taxi trips, Fig.~S1. The figure shows the empirical probability distribution of durations of occupied trips, and of the time spans in-between the occupied trips which comprise both empty trips and all activities where taxis are not being used to transport passengers such as shift changes, lunch breaks, vehicle maintenance. Specific information for distinguishing between the empty trips and the rest of these downtime activities is unfortunately not available, as the data points only include the spatial and temporal information of the pickups and dropoffs of occupied trips. In any case, the in-between durations peak below two minutes, substantially lower than the durations of occupied trips which peak at around six minutes, showing that taxis tend to find new passengers relatively quickly and that taxis spend about 75\% of their on-service time performing occupied trips. This is at least the case for Manhattan; it is an open question whether the dispatching sub-problem becomes more relevant in cities with lower taxi demand. Further, the distribution of in-between durations is long-tailed due to the various types and occurences of taxi downtimes, see Fig.~S1 inset. For example, bumps in the distribution at 6 and 14 hours possibly indicate shift-related durations. However, the vast majority of empty trips are covered by the durations that are shorter than half an hour, which includes over $96\%$ of all downtimes, shown in the main panel of Fig.~S1.

For the specific problem of implementing a dynamic taxi dispatching and trip sharing system these observations suggest that while efficient dispatching and routing of empty taxis is doubtlessly a non-negligible issue, especially for the possible small fraction of cases where taxis do not find passengers quickly, it might be of higher importance to solve the main problem related to the occupied trips, namely of matching trips efficiently. In any case, to understand thoroughly a taxi system's improvability of the handling of empty trips, explicit data of empty trips must be available, preferably containing frequently sampled points per trip, which is not the case in the available data set.

\subsection*{A network-based approach for sharing taxi rides}

In contrast to typical approaches based on linear programs \cite{Berbeglia10,horn2002}
 and references therein, we show in this supplementary information in detail how our new approach allows poly\-nom\-ial-time, i.e., feasible, computation of the optimal ride sharing strategy when at most two trips can be combined, and polynomial-time computation of a constant-factor approximation of the optimal solution when $k>2$ trips can be shared. Notice, though, that the degree of the involved polynomials increases with $k$. In practice, the approach turns out to be computationally feasible for $k=3$, while it becomes impractical for larger values of $k$. 
 
The goal of the trip sharing strategy can be either minimizing the number of trips performed or the total travel cost for a given set of trips, subject to a quality of service constraint (maximum allowed delay at delivery/destination). The former goal allows quantifying the actual number of taxis needed to satisfy the current taxi demand with a shared taxi service. By assuming that cost of a trip is proportional to the travel time, the latter goal becomes a proxy of the carbon emissions generated by the shared taxi fleet to accommodate the total traffic demand. By comparing the total travel time of the shared taxi service with that of the traditional, non-shared taxi service, we can thus quantify the expected reduction in pollution achieved by a shared versus a traditional taxi service. Of course the factors which determine vehicle emissions can be complicated and highly non-linear, such as the most important factor of speed and engine load, which are themselves affected by traffic congestion, driver mentality, traffic signals, posted speed limits, etc. \cite{tong2000omv,kean2003evs}. However, {\em all things being equal}, as general traffic conditions presumably remain largely unaltered by the sharing service, the cumulative emissions can be treated as proportional to the travel time.

The high-level idea, elaborated rigorously in the following sections, is to cast the problem of identifying the best trip sharing strategy as a network problem, where nodes of the network represent taxi trips, and links connect trips that can be combined. The resulting network is called the {\em shareability network}. A maximum tolerated time delay $\Delta$ at both pickup and delivery location regulates the density of the shareability network -- the higher $\Delta$ the more sharing opportunities arise but the lower the quality of service becomes due to the increased delays. We show that the problem of finding the optimal trip sharing strategy when at most two trips can be combined is equivalent to the problem of finding the maximum matching in the shareability network, which can be solved in time $O(m\sqrt{n})$, where $n$ is the number of nodes and $m$ the number of links in the network.  Notice that the shareability network is likely sparse, i.e. the average node degree is a constant which does not depend on $n$, hence the above time complexity reduces to $O(n\sqrt{n})$. More specifically, the maximum matching in the shareability network corresponds to the trip combination strategy that minimizes the number of performed trips. If links in the trip graph are weighted with the travel cost reduction, i.e. the difference between the duration of combined ride and the two single rides, then the problem of finding the trip combination that minimizes the total travel cost is equivalent to the problem of finding the maximum weighted matching in the shareability network, which is also solvable in polynomial time.

If we relax the assumption that at most two trips can be combined, the complexity of the problem increases. In fact, the problem(s) at hand becomes equivalent to the weighted matching problem on $k$-bounded hyper-networks, which is NP-complete when $k>2$ in general hyper-networks. However, polynomial-time algorithms are known that compute a solution which is within a constant factor from optimal. In particular, when the number of combined trips is at most $k$, for any constant $k> 2$, simple greedy algorithms can be used to produce a solution within a factor $k$ from optimal.

We first present the case in which at most two trips can be combined, and then proceed to present the more general (and complex) case of an arbitrary number of combined trips. To ease presentation, we assume single passenger trips.

\subsection*{The two-trips sharing case}

In this section, we assume that at most two trips can be combined. Notice that this is a stricter condition than assuming that the maximum taxi capacity is two. The difference between the two assumptions is exemplified in Fig.~1G of the main text. We have three trips $T_1$, $T_2$, and $T_3$. Assuming that delay constraints on passenger delivery are satisfied, the three trips can be combined in a single trip even using a taxi with capacity two if the passenger of $T_2$ is loaded onboard taxi performing $T_1$ at time $t_2$, unloaded at time $t^{'}_2>t_2$, and the passenger of trip $T_3$ is loaded at time $t_3>t^{'}_2$ -- cfr. the middle case in Fig.~1G of the main text. This combination of trips is not allowed in our model, since only two trips at most can be combined. Notice, on the other hand, that any shared trip obtained by combining at most two single trips can be realized using a taxi with capacity two. More generally, any $k$ combination of trips can be performed using a taxi with capacity $k$. Hence, the trip combination solutions presented in the following can be accomplished using a taxi fleet where each taxi has capacity $k$, where $k$ is the upper bound on the number of trips combined in a single trip.

Let $S=(T,L)$ be the (undirected) {\em shareability network} defined as follows. The node set $T=\{T_1,\dots,T_n\}$ corresponds to the set of all possible $n$ trips. The link set $L=\{L_1,\dots,L_m\}$ is built as follows: link $(T_i,T_j)\in L$ connects nodes $T_i$ and $T_j$ if and only if trips $T_i$ and $T_j$ can be combined. The trip obtained combining trips $T_i$ and $T_j$ is denoted $T_{i,j}$ in the following. Whether trips $T_i,T_j$ can be combined depends on spatial/temporal properties of the two trips, and on an upper bound $\Delta$ on the maximum delivery delay customers can tolerate. How to derive the set of combinable trips (link set $L$) given a travel data set such as the one at hand is a problem we defer to the next section. In the remainder of this section, we assume that $L$ is known and given as input to the problem.
\begin{defn}
A set $\mathcal{T}$ of {\em possibly combined trips}, where combined trips are composed of two single trips, is defined as $\mathcal{T}=\mathcal{T}_1\cup\mathcal{T}_2$, where $\mathcal{T}_1\subseteq T$ is a set of single trips, and $\mathcal{T}_2$ is a set of combined trips $T_{i,j}$, for some $i,j\in\{1,\dots,n\}$.
\end{defn}

\begin{defn}[Feasible trip set]
A set $\mathcal{T}$ of possibly combined trips is {\em feasible} if and only if all trips in $T$ appear once in $\mathcal{T}$, formally: 
\[
\forall T_i\in T,~~~(T_i\in\mathcal{T}_1)\vee(\exists_1 j\mathrm{~s.t.~}T_{i,j}\in\mathcal{T}_2)~.
\]

\end{defn}

Notice that any trip $T_i\in T$ can appear only once in a feasible trip set, either as a single trip (if $T_i\in\mathcal{T}_1$), or combined with another trip (if $T_i\in\mathcal{T}_2$).
The two travel optimization problems we solve in the following are formally defined as follows:

\begin{defn}[{\sc MinimumNumberTrip -- MNT}]
Given the shareability network $S=(T,L)$, determine a feasible trip set of minimum cardinality.
\end{defn}

\begin{defn}[{\sc MinimizeTotalTravelCost -- MTTC}]\label{defWeight}
Given the weighted shareability network $S=(T,L)$ where each link $(T_i,T_j)\in L$ is weighted with $w_{ij}=c(T_i)+c(T_j)-c(T_{i,j})$, where $c(T_x)$ denotes the cost of trip $T_x$, and $c(T_{i,j})$ the cost of the combined trip $T_{i,j}$; determine a feasible trip set such that the total travel cost is minimized. 
\end{defn}

Regarding problem {\sc MTTC}, we observe that we can assume w.l.o.g. that $c(T_{i,j})< c(T_i)+c(T_j)$, since otherwise we can remove link $(T_i,T_j)$ from $L$ without impacting the optimal solution.

\begin{defn}{\sc (Matchings)}
Let $S=(T,L)$ be a shareability network. A matching on $S$ is a set of links $L'\subseteq L$ such that no two links in $L'$ share an endpoint. A maximum matching on $S$ is a matching on $S$ of maximum cardinality. If links in $S$ are weighted, a maximum weighted matching on $S$ is a matching $L'$ on $S$ such that the sum of weights of links in $L'$ is maximum.
\end{defn}

\begin{lemma}\label{equiv}
A set  $\mathcal{T}$ of possibly combined trips is feasible if and only if its subset $\mathcal{T}_2$ of combined trips is a matching of $S$.
\end{lemma}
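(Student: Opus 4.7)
The plan is to prove the biconditional by a direct translation between the two definitions, exploiting the obvious bijection between combined trips $T_{i,j}\in\mathcal{T}_2$ and links $(T_i,T_j)\in L$ of the shareability network $S$. Under this bijection, the matching condition ``no two links in $\mathcal{T}_2$ share an endpoint'' is literally the condition ``no two combined trips in $\mathcal{T}_2$ share a constituent single trip'', so the heart of the argument is to verify that feasibility of $\mathcal{T}$ is equivalent to this non-overlap property of $\mathcal{T}_2$, using the feasibility condition stated in Definition~2.

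For the forward direction, I would assume $\mathcal{T}$ is feasible and argue by contradiction. Suppose $\mathcal{T}_2$ is not a matching; then there exist distinct $T_{i,j},T_{i,k}\in\mathcal{T}_2$ with $j\neq k$ sharing the endpoint $T_i$. This means the single trip $T_i$ is combined with two different partners inside $\mathcal{T}_2$, directly violating the uniqueness clause $\exists_1 j$ in the feasibility definition (and, in particular, making $T_i$ appear twice in $\mathcal{T}$). Hence $\mathcal{T}_2$ must be a matching of $S$.

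For the converse, assume $\mathcal{T}_2$ is a matching. Then every $T_i\in T$ is an endpoint of at most one combined trip in $\mathcal{T}_2$, so the ``$\exists_1 j$'' part of the feasibility disjunction is automatically satisfied whenever it is triggered. To get full feasibility, every $T_i$ must actually appear once in $\mathcal{T}=\mathcal{T}_1\cup\mathcal{T}_2$, which pins down $\mathcal{T}_1$ as the set of trips not covered by the matching $\mathcal{T}_2$. With this canonical choice, each $T_i$ either lies in $\mathcal{T}_1$ (if it is an unmatched vertex) or is covered by exactly one combined trip of $\mathcal{T}_2$ (if it is matched), and the feasibility condition holds.

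The only mildly delicate point, which I would flag explicitly, is that the lemma is really a statement about the canonical pairing between a matching $\mathcal{T}_2\subseteq L$ and the feasible set $\mathcal{T}_1\cup\mathcal{T}_2$ with $\mathcal{T}_1=T\setminus V(\mathcal{T}_2)$, where $V(\mathcal{T}_2)$ denotes the endpoints covered by the matching. Without this implicit convention on $\mathcal{T}_1$ the converse could fail trivially (e.g. by omitting an unmatched trip from $\mathcal{T}_1$, or by redundantly including a matched one). Once this bookkeeping is made explicit, no further work is required beyond unwinding the two definitions.
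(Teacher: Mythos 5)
Your proof is correct and follows essentially the same route as the paper's: the forward direction uses the uniqueness clause ($\exists_1 j$) of the feasibility definition to rule out two combined trips sharing a constituent, and the converse reconstructs $\mathcal{T}_1$ as the set of trips not covered by the matching, exactly as the paper does. Your explicit remark that the converse only holds under the canonical choice $\mathcal{T}_1=T\setminus V(\mathcal{T}_2)$ is a fair observation about an imprecision the paper glosses over, but it does not alter the substance of the argument.
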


\begin{proof}
Let $\mathcal{T}$ be any feasible set of combined trips. Since $\mathcal{T}$ is feasible, it contains either single trips, or trips obtained by combining two trips. Let $\mathcal{T}_2\subseteq\mathcal{T}$ be the set of combined trips in $\mathcal{T}$. For any combined trip $T_{i,j}\in\mathcal{T}_2$, we consider the corresponding link $(T_i,T_j)$ in the shareability network. Notice that, for any other link of the form $(T_i,T_h)$ (or $(T_h,T_j)$) in the shareability network, the corresponding combined trip $T_{i,h}$ (or $T_{h,j}$) is not in $\mathcal{T}_2$, since otherwise conditions on maximal trip combination would be violated. It follows that no two links corresponding to the combined trips in $\mathcal{T}_2$ share a node, i.e., the links corresponding to trips in $\mathcal{T}_2$ are a matching on $S$. 

\noindent The proof of the reverse implication is similar. Any matching $M$ of $S$ uniquely determines a set of combinable trips $\mathcal{T}_2$. A feasible set $\mathcal{T}$ of possibly combined trips is then obtained from $\mathcal{T}_2$ by adding as single trips all trips whose corresponding nodes in $S$ are not part of the matching $M$. 
\end{proof}

\begin{thm}\label{MaxMatching}
Let $M_{\mathrm{max}}\subseteq L$ be a maximum matching on $S$. Then, the minimum cardinality of a feasible set of possibly combined trips is $n-|M_{\mathrm{max}}|$.
\end{thm}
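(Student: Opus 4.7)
The plan is to use Lemma~\ref{equiv} as the bridge between feasible trip sets and matchings of the shareability network, and then reduce the optimization of the feasible-set cardinality to a simple counting argument on matching edges.

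First I would write down the cardinality of a feasible trip set in terms of its combined part. Given any feasible $\mathcal{T} = \mathcal{T}_1 \cup \mathcal{T}_2$, each element of $\mathcal{T}_2$ is a combined trip built from exactly two distinct original trips in $T$, and by feasibility every trip in $T$ is covered exactly once. Hence $|\mathcal{T}_1| = n - 2|\mathcal{T}_2|$, and therefore $|\mathcal{T}| = |\mathcal{T}_1| + |\mathcal{T}_2| = n - |\mathcal{T}_2|$. So minimizing $|\mathcal{T}|$ over feasible sets is equivalent to maximizing $|\mathcal{T}_2|$ over feasible sets.

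Next I would invoke Lemma~\ref{equiv}: the map sending a feasible set $\mathcal{T}$ to the edge set of the shareability network $S$ that corresponds to $\mathcal{T}_2$ is a bijection between feasible sets and matchings of $S$, and this bijection preserves cardinality in the sense that $|\mathcal{T}_2|$ equals the size of the corresponding matching. Thus $\max\{|\mathcal{T}_2| : \mathcal{T} \text{ feasible}\} = |M_{\max}|$, and combining with the identity above gives $\min |\mathcal{T}| = n - |M_{\max}|$.

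I do not expect any serious obstacle; the statement is essentially a restatement of Lemma~\ref{equiv} together with the counting identity $|\mathcal{T}| = n - |\mathcal{T}_2|$. The one point to be careful about is to make explicit both directions of the optimization — that any matching (in particular $M_{\max}$) yields a feasible set of the claimed size via the reverse implication in Lemma~\ref{equiv}, and that no feasible set can do better, because any smaller feasible set would produce a matching larger than $M_{\max}$, contradicting maximality.
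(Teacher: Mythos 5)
Your proposal is correct and follows essentially the same route as the paper: both reduce the claim to the counting identity $|\mathcal{T}| = n - |\mathcal{T}_2|$ and then use Lemma~\ref{equiv} to identify the maximal $\mathcal{T}_2$ with a maximum matching of $S$. Your version merely spells out the counting step and the two directions of the optimization slightly more explicitly than the paper does.
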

\begin{proof}
By Lemma \ref{equiv}, the cardinality of any feasible set $\mathcal{T}$ of possibly combined trips is $n-|M|$, where $|M|$ is the cardinality of the matching $M$ corresponding to the subset $\mathcal{T}_2$ of combined trips in $\mathcal{T}$. The proof then follows by observing that the minimum cardinality of a feasible set is obtained when $|M|$ is maximum, i.e., when $M$ is a maximum matching for $S$.
\end{proof}

The proof of Theorem \ref{MaxMatching} suggests a polynomial time algorithm for solving {\sc MNT}, which is reported below. The feasible set $\mathcal{T}$ of trips to be performed is initialized to the entire set of single trips $T$. Given the shareability network $S$, a maximum matching $M_{\mathrm{max}}$ on $S$ is computed using, e.g., Edmond's algorithm \cite{Galil86}.
 For any edge $(T_i,T_j)\in M_{\mathrm{max}}$, the combined trip $T_{i,j}$ is included in $\mathcal{T}$, while the individual trips $T_i$ and $T_j$ are removed from $\mathcal{T}$. After all edges in $M_{\mathrm{max}}$ have been considered and processed as above, $\mathcal{T}$ contains a set of (possibly combined) trips of minimum size that satisfies all customers, i.e., it is a feasible trip set of minimum size. The time complexity of {\sc MaxMatch} is determined by the complexity of the matching algorithm. Considering that the shareability network is likely to be very sparse in practice, the Edmond's matching algorithm yields $O(n\sqrt{n})$ time complexity.

\begin{center}
{\begin{minipage}{120mm}\hrulefill
\small
\begin{tabbing}
Alg\=or\=ithm\=~ {\sc MaxMatch}\\
{\em Input}:\>\>\>the shareability network $S=(T,L)$\\
{\em Output}: \>\>\>the set $\mathcal{T}$ of (possibly combined) trips to be performed\\
1.\> $\mathcal{T}=T$\\
2.\> Build a maximum matching $M_{\mathrm{max}}$ on $S$\\
3.\> {\bf for each} $(T_i,T_j)\in M_{\mathrm{max}}$ {\bf do}\\
4.\>\>$\mathcal{T}=\mathcal{T}\cup\{T_{i,j}\}$; $\mathcal{T}=\mathcal{T}-\{T_i,T_j\}$\\
5\> {\bf return} $\mathcal{T}$
\end{tabbing}
\vspace*{-3mm}
\hrulefill
\end{minipage}}\\[10pt]
Algorithm {\sc MaxMatch} for optimally solving {\sc MNT}.
\end{center}

\begin{thm}\label{MaxWeightMatching}
Let $M_{\mathrm{max}_w}\subseteq L$ be a maximum weighted matching on $S$, where $S$ is link weighted as described in Definition \ref{defWeight}. Then, the feasible set of possibly combined trips of minimum total travel cost has cost
 \[
 c_{\mathrm{min}}=\sum_{i=1,\dots,n}{c(T_i)}-\sum_{(T_i,T_j)\in M_{\mathrm{max}_w}} c(T_i)+c(T_j)-c(T_{i,j})~.
 \]
\end{thm}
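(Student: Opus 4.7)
The plan is to reduce the cost-minimization question to the weighted matching problem on $S$ via Lemma \ref{equiv}, which already tells us that feasible sets of possibly combined trips are in bijection with matchings of $S$: given a feasible $\mathcal{T}=\mathcal{T}_1\cup\mathcal{T}_2$, the set $M=\{(T_i,T_j):T_{i,j}\in\mathcal{T}_2\}$ is a matching, and every trip not covered by $M$ appears in $\mathcal{T}_1$. So I will fix an arbitrary feasible $\mathcal{T}$, denote by $M$ the matching it induces, and compute $c(\mathcal{T})$ as a function of $M$.

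Next, the key algebraic step. By definition of a feasible trip set, every $T_i\in T$ either belongs to $\mathcal{T}_1$ or is an endpoint of exactly one combined trip in $\mathcal{T}_2$, so
\[
\sum_{T_i\in T} c(T_i) \;=\; \sum_{T_i\in\mathcal{T}_1} c(T_i) \;+\; \sum_{(T_i,T_j)\in M}\bigl(c(T_i)+c(T_j)\bigr).
\]
Substituting this into the total cost of $\mathcal{T}$, namely $c(\mathcal{T})=\sum_{T_i\in\mathcal{T}_1}c(T_i)+\sum_{(T_i,T_j)\in M}c(T_{i,j})$, and using $w_{ij}=c(T_i)+c(T_j)-c(T_{i,j})$ from Definition \ref{defWeight}, I will obtain
\[
c(\mathcal{T}) \;=\; \sum_{i=1}^{n} c(T_i) \;-\; \sum_{(T_i,T_j)\in M} w_{ij}.
\]

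Finally, since $\sum_{i} c(T_i)$ is a constant depending only on the instance, minimizing $c(\mathcal{T})$ over feasible $\mathcal{T}$ is equivalent to maximizing $\sum_{(T_i,T_j)\in M} w_{ij}$ over matchings $M$ of $S$; the remark following Definition \ref{defWeight} guarantees that every retained link has positive weight, so no attention to sign issues is required, and the optimum is attained precisely at $M=M_{\mathrm{max}_w}$. Plugging this matching into the displayed identity above yields the claimed formula for $c_{\mathrm{min}}$.

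I do not expect a genuine obstacle: the content is essentially the bookkeeping identity that rewrites the aggregate trip cost as ``all trips run alone, minus the savings accumulated on matched pairs''. The only point that deserves care is ensuring the bijection between feasible sets and matchings coming from Lemma \ref{equiv} is used in both directions, so that the minimum over feasible $\mathcal{T}$ and the maximum over matchings $M$ are really taken over corresponding objects; this follows directly from that lemma and requires no extra argument.
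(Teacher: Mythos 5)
Your proposal is correct and follows essentially the same route as the paper's proof: invoke Lemma \ref{equiv} to identify feasible trip sets with matchings, rewrite the total cost as $\sum_i c(T_i)$ minus the accumulated savings $\sum_{(T_i,T_j)\in M} w_{ij}$, and observe that minimizing the cost is equivalent to maximizing the matching weight. Your version merely makes the bookkeeping identity and the positivity of the retained link weights explicit, which the paper states more informally.
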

\begin{proof}
By Lemma \ref{equiv}, the subset $\mathcal{T}_2$ of combined trips of any feasible trip set $\mathcal{T}$ corresponds to a matching $M$ on $S$. For any edge $(T_i,T_j)\in M$, the travel cost reduction due to the combined trip $T_{i,j}$ with respect to the cost of the two single trips $T_i,T_j$ is $c(T_i)+c(T_j)-c(T_{i,j})$. Thus, the total travel cost for any feasible trip set $\mathcal{T}$ is given by the total travel cost of the single trips ($\sum_{i=1,\dots,n}{c(T_i)}$), minus the sum of the cost savings achieved by the combined trips in $M$, i.e., $\sum_{(T_i,T_j)\in M} c(T_i)+c(T_j)-c(T_{i,j})=\sum_{(T_i,T_j)\in M} w_{ij}$. The proof then follows by observing that, if $M_{\mathrm{max}_w}$ is a maximum weighted matching on $S$, then the sum of the cost savings is maximized, and the total travel cost of $\mathcal{T}$ is minimized.
\end{proof}

The algorithm {\sc WeightedMaxMatch} to find the feasible trip set of minimum travel cost is similar to {\sc MaxMatch}, the only difference being that the maximum matching algorithm in step 2 is replaced with a maximum weighted matching algorithm. For instance, we can use Edmond's algorithm for weighted matching \cite{Galil86}
, which on a sparse graph yields time complexity $O(n^2 \log n)$.

\subsubsection*{The $k$-trips sharing case}

In this section, we generalize the results presented in the previous section to the more challenging scenario in which an arbitrary number $k> 2$ of trips can be combined. The only assumption we make about the value of $k$ in this section is that $k$ does not depend on the total number of trips $n$, i.e., $k=O(1)$ in asymptotic notation. Considering that $k$ is also an upper bound on taxi capacity needed to accommodate the $k$ combined trips, assuming $k$ a small constant is reasonable in any practical case.

We first present how a combination of up to $k$ trips can be represented in form of a $k$-bounded shareability hyper-network. Some definitions are in order before proceeding further.

\begin{defn}
A set $\mathcal{T}$ of {\em possibly combined trips}, where combined trips are composed of at most $k\ge 2$ single trips, is defined as $\mathcal{T}=\mathcal{T}_1\cup\dots\cup\mathcal{T}_k$, where $\mathcal{T}_1\subseteq T$ is a set of single trips, and $\mathcal{T}_h$, with $2\le h\le k$, is a set of combined trips $T_{i_1,\dots,i_h}$, for some $i_1,\dots,i_h\in\{1,\dots,n\}$.
\end{defn}

\begin{defn}[Feasible trip set]
A set $\mathcal{T}$ of possibly combined trips is {\em feasible} if and only if all trips in $T$ appear once in $\mathcal{T}$, formally: 
\[
\forall T_j\in T,~~~(T_j\in\mathcal{T}_1)\vee(\exists_1 h,\ell\mathrm{~s.t.~}(T_{i_1,\dots,i_h}\in\mathcal{T}_h)\wedge(i_\ell=j))~.
\]

\end{defn}

Notice that also in this case, any trip $T_i\in T$ can appear only once in a feasible trip set, either as a single trip, or in a combined trip.

\begin{defn}
A {\em hyper-network} $H$ is a pair $H=(T,\mathcal{L})$ where $T=(T_1,\dots,T_n)$ is a set of nodes (representing single trips in the context at hand), and $\mathcal{L}$ is a set of non-empty subsets of $T$ called hyper-links. The {\em size} of a hyper-link is the number of nodes it connects. A hyper-network whose hyper-links have size $\le k$, for some integer $k\ge 2$, is called a $k${\em -bounded hyper-network}.
\end{defn} 

\begin{defn}
A (hyper-)matching $M$ on the hyper-network $H=(T,\mathcal{L})$ is a subset of the hyper-links in $\mathcal{L}$ such that each node in $T$ appears in at most one hyper-link.
\end{defn}

Similarly to the previous section, given a set of trips $T$, we can represent all possible combinations of up to $k$ trips -- defined according to some quality of service criterion -- with a $k$-bounded hyper-network, which we call the {\em shareability hyper-network}. Formally, the trip hyper-newtork is defined as the $k$-bounded hyper-network $H=(T,\mathcal{L})$, where $L_i=(T_{i_1},T_{i_2},\dots)\in \mathcal{L}$ if and only if trips $T_{i_1},T_{i_2},\dots$ can be combined. Notice that, if hyper-link $L_i=(T_{i_1},T_{i_2},\dots)$ belongs to the shareability hyper-network, so do all hyper-links formed of any subset of the nodes connected by $L_i$. This is due to the fact that, if, say, trip $T_{1,2,3,4}$ is feasible, so do trips $T_{1,2}$, $T_{1,2,3}$, etc. We call a hyper-link in $H$ {\em maximal} if its incident nodes are not a subset of any other hyper-link in $H$.

We are now ready to formally define the two considered optimization problems.

\begin{defn}[$k${\sc  -MinimumNumberTrip} -- $k${\sc MNT}]
Given the shareability hyper-network $H=(T,\mathcal{L})$, determine a feasible trip set $\mathcal{T}$ of minimum cardinality.
\end{defn}

\begin{defn}[$k${\sc -MinimizeTotalTravelCost} -- $k${\sc MTTC}]
Given the weighted shareability hyper-network $H=(T,\mathcal{L})$ where each link $L_i=(T_{i_1},T_{i_2},\dots)\in \mathcal{L}$ is weighted with $w_i^c=\sum_{i_j\in L_i} c(T_{i_j})-c(T_{i_1,i_2,\dots})$, where $c(T_{i_j})$ denotes the cost of trip $T_{i_j}$, and $c(T_{i_1,i_2,\dots})$ the cost of the combined trip $T_{i_1,i_2,\dots}$; determine a feasible trip set $\mathcal{T}$ such that the total travel cost is minimized. 
\end{defn}

\begin{lemma}\label{equivK}
A set  $\mathcal{T}$ of possibly combined trips is feasible if and only if its subset $\mathcal{T}_c=\mathcal{T}-\mathcal{T}_1$ of combined trips is a (hyper-)matching of $H$.
\end{lemma}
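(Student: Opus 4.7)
The plan is to follow the same two-direction argument used for Lemma \ref{equiv}, adapting it to hyper-links of arbitrary size up to $k$. The key observation is that the feasibility condition for $\mathcal{T}$ is stated precisely in terms of each trip $T_j \in T$ appearing in exactly one place: either as a singleton in $\mathcal{T}_1$, or as an index of exactly one combined trip $T_{i_1,\dots,i_h}$ in some $\mathcal{T}_h$. This ``each node covered at most once by a combined trip'' condition is, almost by definition, the hyper-matching condition on $H$.

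For the forward direction, I would start with an arbitrary feasible $\mathcal{T}$ and consider its combined part $\mathcal{T}_c = \mathcal{T}_2 \cup \dots \cup \mathcal{T}_k$. Each combined trip $T_{i_1,\dots,i_h} \in \mathcal{T}_c$ corresponds to the hyper-link $\{T_{i_1},\dots,T_{i_h}\} \in \mathcal{L}$ (existence of this hyper-link in $H$ is guaranteed by the construction of the shareability hyper-network, since these trips can be combined). Suppose for contradiction that two distinct combined trips $T_{i_1,\dots,i_h}$ and $T_{j_1,\dots,j_\ell}$ in $\mathcal{T}_c$ shared a common index, say $T_a$. Then $T_a$ would appear in two different elements of $\mathcal{T}$, violating the uniqueness clause $\exists_1$ in the feasibility definition. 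Hence the corresponding hyper-links are pairwise node-disjoint, i.e., they form a hyper-matching on $H$.

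For the reverse direction, I would take any hyper-matching $M \subseteq \mathcal{L}$ and construct $\mathcal{T}$ explicitly: for each hyper-link $L = \{T_{i_1},\dots,T_{i_h}\} \in M$, place the combined trip $T_{i_1,\dots,i_h}$ into the appropriate bucket $\mathcal{T}_h$, and then add to $\mathcal{T}_1$ every single trip $T_j \in T$ whose corresponding node is not incident to any hyper-link of $M$. Since $M$ is a hyper-matching, no node is covered by two hyper-links, so no trip is placed into a combined trip twice; and by construction, no trip placed into a combined trip is also placed as a singleton. Every trip is therefore accounted for exactly once, so $\mathcal{T}$ is feasible and its combined part coincides with $M$.

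I do not foresee a genuine technical obstacle here, since the lemma is essentially a bookkeeping statement translating the ``each index used exactly once'' clause into the node-disjointness of hyper-links. The only mild subtlety worth being careful about is that the shareability hyper-network contains hyper-links of all sizes from $2$ up to $k$ (including non-maximal ones, by the downward-closure observation made in the text), so one need not worry whether the hyper-link corresponding to a combined trip in $\mathcal{T}_c$ actually lies in $\mathcal{L}$ — it does, regardless of its size. Beyond that, the argument is a direct generalization of the $k=2$ case.
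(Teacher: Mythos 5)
Your proof is correct and follows exactly the route the paper intends: the paper's own proof of this lemma is a one-line deferral to the argument of Lemma~\ref{equiv}, and your two-direction argument is precisely that argument generalized to hyper-links, including the correct observation that the uniqueness clause $\exists_1$ is what forces node-disjointness. No gaps; your version is simply more explicit than the paper's.
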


\begin{proof}
The proof is along the same lines of the proof of Lemma \ref{equiv}.
\end{proof}

\begin{thm}\label{kWeightMatch}
Let $H=(T,\mathcal{L})$ be a shareability hyper-network, and assign weight $w_i=|L_i|-1$ to each hyper-link $L_i\in \mathcal{L}$. Then, the minimum cardinality of a feasible set of possibly combined trips is $n-\sum_{L_i\in M_{\mathrm{max}_w}} w_i$, where $M_{\mathrm{max}_w}$ is a maximum weighted matching of $H$.
\end{thm}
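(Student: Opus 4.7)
\medskip

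\noindent\textbf{Proof proposal.} The plan is to reduce the counting problem to a weighted matching problem on $H$ by the same route used in the proof of Theorem~\ref{MaxMatching}, only now each hyper-link can collapse more than two trips at once. First I would invoke Lemma~\ref{equivK} to identify any feasible trip set $\mathcal{T}$ with a hyper-matching $M\subseteq\mathcal{L}$: the hyper-links of $M$ encode the combined trips in $\mathcal{T}_c$, and the nodes of $T$ not covered by $M$ contribute the single trips in $\mathcal{T}_1$.

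Next I would count the cardinality of $\mathcal{T}$ in terms of $M$. Each hyper-link $L_i\in M$ replaces its $|L_i|$ constituent single trips by a single combined trip, which is a net reduction of $|L_i|-1=w_i$ units. Since the hyper-matching property guarantees that the sets of nodes covered by distinct hyper-links of $M$ are pairwise disjoint, these reductions simply add, and the number of uncovered nodes (which become single trips) together with the $|M|$ combined trips gives
\[
|\mathcal{T}|=\Bigl(n-\sum_{L_i\in M}|L_i|\Bigr)+|M|=n-\sum_{L_i\in M}(|L_i|-1)=n-\sum_{L_i\in M}w_i.
\]
Conversely, Lemma~\ref{equivK} tells us that every hyper-matching arises from some feasible trip set, so the range of $|\mathcal{T}|$ as $\mathcal{T}$ varies over feasible sets is exactly $\{n-\sum_{L_i\in M}w_i : M\text{ a hyper-matching of }H\}$.

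Finally I would minimize $|\mathcal{T}|$ over feasible $\mathcal{T}$, which by the identity above is equivalent to maximizing $\sum_{L_i\in M}w_i$ over hyper-matchings $M$; the maximizer is by definition a maximum weighted matching $M_{\mathrm{max}_w}$, and substituting yields the claimed value $n-\sum_{L_i\in M_{\mathrm{max}_w}}w_i$. The only subtle point, and the one I would be careful to state explicitly, is the disjoint-coverage argument justifying the additivity of the reductions; everything else is bookkeeping parallel to Theorem~\ref{MaxMatching}. No NP-hardness or approximation issue enters here, since the statement concerns the exact optimum; algorithmic tractability via greedy $k$-approximation is a separate matter discussed afterwards.
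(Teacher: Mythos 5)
Your proposal is correct and follows essentially the same route as the paper's own proof: identify feasible trip sets with hyper-matchings via Lemma~\ref{equivK}, observe that each hyper-link $L_i$ in the matching saves $w_i=|L_i|-1$ trips, sum these savings over the (node-disjoint) hyper-links to get $|\mathcal{T}|=n-\sum_{L_i\in M}w_i$, and minimize by taking $M$ to be a maximum weighted matching. Your explicit bookkeeping identity $|\mathcal{T}|=(n-\sum_{L_i\in M}|L_i|)+|M|$ and the remark on disjoint coverage make the additivity step slightly more explicit than the paper's version, but the argument is the same.
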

\begin{proof}
By Lemma \ref{equivK}, any feasible trip set $\mathcal{T}$ uniquely defines a matching $M$ in the shareability hyper-network $H$. Consider any hyper-link $L_i$ in the matching $M$. By definition, $w_i$ represents the number of trips that are saved by performing the combined trip corresponding to hyper-link $L_i$ instead of performing all single trips in $L_i$. For instance, if $L_i=\{T_{i_1},\dots,T_{i_k}\}$, the combination of $k$ trips allows reducing the number of performed trips from $k$ to 1; i.e., the total number of trips is reduced of $w_i=k-1$. Based on this observation, the total number of trips performed for feasible trip set $\mathcal{T}$ equals $n-\sum_{L_i\in M} w_i$. The proof then follows by observing that the total number of trips is minimized when $\sum_{L_i\in M} w_i$ is maximized, i.e., when $M$ is a maximum weighted matching for $H$.
\end{proof}

Unfortunately, the maximum (weighted) matching problem on $k$-bounded hyper-networks is NP-complete for $k>2$ on general hyper-networks, hence finding the optimal solution to $k${\sc MNT} is likely computationally hard. However, a simple greedy heuristic can be used to find a $k$-approximation of the optimal solution in time $O(m \log m)$, where $m$ is the number of hyper-links in the hyper-network, which yields $O(n\log n)$ complexity under our working assumption of sparse shareability hyper-network. In the greedy heuristic, a hyper-link $L_i$ of maximum weight is added to the current matching at each iteration, and hyper-links sharing at least one node with $L_i$ are removed from the set of candidate hyper-links for matching before proceeding to the next iteration. Observe that better approximation ratios can be obtained at the price of increased (but still polynomial) time complexity using, for instance, the algorithm of \cite{Chandra01}
 which finds a $2(k+1)/3$ approximation of the optimal solution. Note that the weighted maximum matching problem on hyper-networks is equivalent to the weighted set packing problem. The greedy algorithm for finding a $k$-approximation to the optimal $k${\sc MNT} solution is reported below.

\begin{center}
{\begin{minipage}{120mm}\hrulefill
\small
\begin{tabbing}
Alg\=or\=ithm\=~ {\sc GreedyKMatching}\\
{\em Input}:\>\>\>the shareability hyper-network $H=(T,\mathcal{L})$ with weights $w_i$ on hyper-links\\
{\em Output}: \>\>\>the set $\mathcal{T}$ of (possibly combined) trips to be performed\\
1.\> $\mathcal{T}=T$\\
2.\> Build a weighted matching $M_{w}$ of $H$ using the greedy heuristic\\
3.\> {\bf for each} $L_i=(T_{i_1},\dots,T_{i_j})\in M_{w}$ {\bf do}\\
4.\>\>$\mathcal{T}=\mathcal{T}\cup\{T_{i_1,\dots,i_j}\}$; $\mathcal{T}=\mathcal{T}-\{T_{i_1}\}-\dots-\{T_{i_j}\}$\\
5\> {\bf return} $\mathcal{T}$
\end{tabbing}
\vspace*{-3mm}
\hrulefill
\end{minipage}}\\[10pt]
Algorithm {\sc GreedyKMatching} for finding a $k$-approximation to $k${\sc MNT}.
\end{center}

\begin{thm}

Let $H=(T,\mathcal{L})$ be the shareability hyper-network, where each $L_i=(T_{i_1},T_{i_2},\dots)\in\mathcal{L}$ is weighted with the weight $w^c_i=\sum_{i_j\in L_i}c(T_{i_j})-c(T_{i_1,i_2,\dots})$ representing the cost saving in performing the combined trip versus the collection of single trips. Then, the feasible set of possibly combined trips of minimum total travel cost has cost
\[
 c_{\mathrm{min}}=\sum_{i=1,\dots,n}{c(T_i)}-\sum_{L_i\in M_{\mathrm{max}_w}} w_i^c~,
 \]
where $M_{\mathrm{max}_w}$ is a maximum weighted matching of $H$.
\end{thm}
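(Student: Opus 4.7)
The plan is to mirror almost exactly the argument used for Theorem 2 ({\sc MaxWeightMatching}) in the two-trips case, with the $k$-bounded shareability hyper-network $H$ replacing the ordinary shareability network, and to rely on Lemma \ref{equivK} to translate between feasible trip sets and hyper-matchings. The only conceptual work is to verify that the per-hyper-link cost saving aggregates correctly across a matching, and that is a direct decomposition of the total travel cost.

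Concretely, I would first invoke Lemma \ref{equivK} to set up the bijection: any feasible trip set $\mathcal{T}$ splits into its single-trip part $\mathcal{T}_1$ and its combined-trip part $\mathcal{T}_c$, where $\mathcal{T}_c$ corresponds to a hyper-matching $M$ of $H$ (via the identification of each combined trip $T_{i_1,\dots,i_h}\in\mathcal{T}_c$ with the hyper-link $L=\{T_{i_1},\dots,T_{i_h}\}\in \mathcal{L}$). Because $M$ is a matching, the sets of nodes covered by different hyper-links in $M$ are disjoint, so each original trip $T_i$ is either represented by its own singleton cost $c(T_i)$ in $\mathcal{T}_1$, or is contained in exactly one combined trip in $\mathcal{T}_c$.

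Next I would write out the cost of $\mathcal{T}$ by grouping contributions. For a hyper-link $L_i=(T_{i_1},T_{i_2},\dots)\in M$, the cost paid by $\mathcal{T}$ for the passengers in $L_i$ is $c(T_{i_1,i_2,\dots})$, whereas if those trips had been performed individually the cost would have been $\sum_{i_j\in L_i} c(T_{i_j})$; the difference is exactly $w_i^c$. Summing over all hyper-links of $M$ and adding the (unchanged) cost $\sum_{T_i\in\mathcal{T}_1} c(T_i)$ of the singletons, and using the fact that $\mathcal{T}_1$ together with the nodes covered by $M$ partitions $T$, gives
\[
c(\mathcal{T})=\sum_{i=1}^{n}c(T_i)-\sum_{L_i\in M} w_i^c~.
\]
The first term is a constant depending only on the input, so minimizing $c(\mathcal{T})$ over feasible trip sets is equivalent to maximizing $\sum_{L_i\in M} w_i^c$ over hyper-matchings $M$ of $H$, i.e., computing a maximum weighted matching $M_{\mathrm{max}_w}$. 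Substituting $M=M_{\mathrm{max}_w}$ into the displayed identity yields the claimed expression for $c_{\mathrm{min}}$.

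The main (non-)obstacle worth flagging is that there is nothing to prove about computational tractability here: the theorem only asserts the value of the optimum, not how to obtain it, so the NP-completeness of weighted hyper-matching for $k>2$ does not enter the argument. One minor care point is the other direction of Lemma \ref{equivK}: I must confirm that any hyper-matching $M$ can indeed be completed to a feasible trip set by declaring all uncovered trips as singletons, so that the optimum over matchings is actually attained by some feasible $\mathcal{T}$; this is immediate from the definition of feasibility, but should be stated explicitly to close the equivalence used above.
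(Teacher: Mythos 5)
Your proposal is correct and follows essentially the same route as the paper, which proves this theorem ``along the same lines'' as Theorem \ref{kWeightMatch}: use Lemma \ref{equivK} to identify the combined-trip part of a feasible set with a hyper-matching, decompose the total cost as $\sum_i c(T_i)$ minus the sum of the savings $w_i^c$ over the matching, and observe that minimizing the cost is equivalent to maximizing that sum. Your explicit check that an arbitrary hyper-matching extends to a feasible trip set (so the optimum over matchings is attained) is a sensible point to make, but it is already contained in the ``if and only if'' of Lemma \ref{equivK}.
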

\begin{proof}
The proof is along the same lines of the proof of Theorem \ref{kWeightMatch}. 
\end{proof}

The greedy heuristic for computing a $k$ approximation of the optimal solution to $k${\sc MTTC} can be straightforwardly obtained from Algorithm {\sc GreedyKMatching} by using weights $w_i^c$ instead of $w_i$ to label hyper-links in the shareability hyper-network. 

\subsection*{Building the shareability network}

In this section, we describe a method for producing the shareability (hyper-)network, given a set of single trips $T=\{T_1,\dots,T_n\}$ and a quality of service criterion $\Delta$. We present in detail the method for $k=2$, and shortly describe how the technique can be generalized to arbitrary values of $k$.

Each trip $T_i\in T$ is characterized by the following quantities: the trip origin $o_i$ and destination $d_i$, that we can think of as pairs of $(lat,lon)$ coordinates; the start time $st_i$; and the arrival time $at_i$. We start defining a notion of feasible trip combination based on a quality of service criterion $\Delta$.

\begin{defn}
The combined trip $T_{i,j}$ is {\em feasible} if and only if a trip route can be found such that the following conditions are satisfied:
\begin{itemize}
\item[a)] $st_i\le pt_i\le st_i+\Delta$;
\item[b)] $st_j\le pt_j\le st_j+\Delta$;
\item[c)] $dt_i\le at_i+\Delta$;
\item[d)] $dt_j\le at_j+\Delta$;
\end{itemize}
where $pt_x$ is the pickup time at $o_x$ in the combined trip, and $dt_x$ is the delivery time at $d_x$ in the combined trip.
\end{defn}
The above definition is motivated by the fact that a customer might be willing to wait at most some extra time $\Delta$ at her pickup location (and in general she might not be able to show up at $o_i$ before time $st_i$), as well as to arrive at destination with delay at most $\Delta$ (early arrivals are likely not to be a problem for customers).

\begin{thm}\label{buildTripGraph}
Building the shareability network $S=(T,L)$ starting from the trip set $T=\{T_1,\dots,T_n\}$ requires $O(n^2)$ time.
\end{thm}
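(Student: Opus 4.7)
The plan is to construct $L$ by iterating over every unordered pair of trips and deciding in constant time whether the pair can be combined. Since there are $\binom{n}{2} = O(n^2)$ pairs, an $O(1)$ per-pair test yields the claimed $O(n^2)$ bound.

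First I would observe that a candidate route for the combined trip $T_{i,j}$ is simply an interleaving of the four events $o_i, o_j, d_i, d_j$ subject to the precedence constraints $o_i \prec d_i$ and $o_j \prec d_j$. There are exactly six such interleavings, and two of them (namely $o_i, d_i, o_j, d_j$ and $o_j, d_j, o_i, d_i$) correspond to performing the trips sequentially with no overlap; so in all cases there is a constant number of candidate orderings to examine. For each ordering, the pickup times $pt_x$ and delivery times $dt_x$ are completely determined by $st_i$, $st_j$, and the travel times along the legs of the chosen route, so the feasibility conditions (a)--(d) reduce to a fixed number of comparisons. If at least one ordering passes all four tests, the link $(T_i,T_j)$ is added to $L$; otherwise it is not.

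The ingredient that makes each per-pair test run in $O(1)$ is the availability of an $O(1)$ oracle for travel times between the endpoints $o_i,o_j,d_i,d_j$. I would invoke the travel-time precomputation described in the companion section (\emph{Computing travel times}) to justify this, so that each candidate ordering uses only a constant number of table lookups and arithmetic comparisons. Putting this together, the total cost is $O(n^2) \cdot O(1) = O(n^2)$.

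The main obstacle is making the constant-time travel-time access rigorous. A strict reading might insist on charging the cost of querying shortest paths in the street network; I would address this either by appealing to a preprocessed table of travel times between the $2n$ endpoints (built once and amortized into the $O(n^2)$ budget, since $(2n)^2 = O(n^2)$ entries suffice) or by adopting the simple speed/distance model used elsewhere in the paper. I would state this assumption explicitly at the outset of the proof so that the per-pair $O(1)$ claim is unambiguous.
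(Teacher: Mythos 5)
Your proposal is correct and follows essentially the same route as the paper's proof: enumerate all $O(n^2)$ pairs and, for each, test a constant number of candidate orderings of $o_i,o_j,d_i,d_j$ (the paper restricts attention to the four routes in which both pickups precede both deliveries, discarding the two fully sequential interleavings you mention) by checking whether a pickup time $pt_i$ satisfying conditions (a)--(d) exists, each check being a fixed number of comparisons given $O(1)$ travel-time lookups. Your explicit remark that the per-pair $O(1)$ bound presupposes a precomputed travel-time oracle is a point the paper leaves implicit (it relies on the all-pairs travel-time matrix built in the \emph{Computing travel times} section), but it does not change the argument.
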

\begin{proof}
In the worst-case, we have to consider all $O(n^2)$ possible pairs of trips $T_i,T_j$. For each pair, the feasibility condition for the combined trip $T_{i,j}$ can be verified in $O(1)$ as follows. Observe that only four routes are possible for trip $T_{i,j}$: $o_i\to o_j\to d_i\to d_j$, $o_i\to o_j\to d_j\to d_i$, $o_j\to o_i\to d_i\to d_j$, and $o_j\to o_i\to d_j\to d_i$. Let us consider a specific route, e.g., $o_i\to o_j\to d_i\to d_j$. Condition $a)$ for feasibility is always satisfied by setting a pickup time at $o_i$ in the desired time window. The pickup time at $o_j$ can then be computed as follows: $pt_j= pt_i+tt(o_i,o_j)$, where $tt(x,y)$ denotes the travel time between $x$ and $y$. The delivery time at $d_i$ is defined as follows: $dt_i=pt_j+tt(o_j,d_i)$. Finally, the delivery time at $d_j$ is defined as $dt_j=dt_i+tt(d_i,d_j)$. Thus, the feasibility condition for route $o_i\to o_j\to d_i\to d_j$ can be verified by checking whether a value of $pt_i$ that simultaneously satisfies the four conditions below exists, which requires $O(1)$ time:
\begin{eqnarray}
st_i \le pt_i&\le& st_i+\Delta\\
st_j\le pt_i+tt(o_i,o_j)&\le& st_j+\Delta\\
pt_i+tt(o_i,o_j)+tt(o_j,d_i)&\le& at_i+\Delta\\
pt_i+tt(o_i,o_j)+tt(o_j,d_i)+tt(d_i,d_j)&\le& at_j+\Delta
\end{eqnarray}
The feasibility conditions for the other routes can be verified similarly. If there exists at least one route which satisfies the feasibility condition, then trip $T_{i,j}$ is feasible, and link $(T_i,T_j)$ is included in the trip graph. Otherwise, trips $T_i$ and $T_j$ cannot be combined.

\noindent Observe that the number of trip pairs to consider for combination can be reduced by considering only trip pairs $T_i,T_j$ such that: $i)$ $st_j\le at_i+\Delta$; and $ii)$ $st_i\le at_j+\Delta$. Simultaneously satisfying conditions $i)$ and $ii)$ is a necessary (but not sufficient) condition for feasibility of trip $T_{i,j}$. In practice, this heuristic considerably reduces the running time of the shareability network construction algorithm, although the worst-case time complexity remains $O(n^2)$.
\end{proof}

The algorithm for building the trip graph reported in the proof of Theorem \ref{buildTripGraph} can be extended in a straightforward way to the case of combinations of up to $k$ trips, yielding a time complexity of $O(n^k)$; in particular, all possible routes connecting $k$ origins with $k$ destinations, subject to the condition that each origin must precede the respective destination in the route, must be considered. The number of possible such routes grows exponentially with $k$, which is however assumed to be a small constant in our model. For instance, 60 possible routes connecting origins with destinations must be considered when $k=3$.

\subsection*{Runtime and feasibility considerations}
Although we have proven above theoretically that the trip sharing problem can be solved by maximum matching algorithms with polynomial time complexity, it is important to demonstrate that the computational implementation indeed runs fast enough on practical problem sizes and that the solution is feasible to operate in an actual real-time taxi sharing system. On average, in every two minutes there are about 600 trips requested in NYC, which constitutes the typical number of nodes in the shareability network that have to be matched optimally. To account for possible extraordinary spikes in the demand, we measured the runtime for sharebility networks of size 10,000. Here, the runtime of the software implemented in the programming language C on a Linux-based workstation equipped with i7-3930K CPU and 32GB of RAM is less than 0.1 seconds. For the more realistic size of 1,000 nodes the runtime is even less than 0.01 seconds. These measurements show that the proposed graph matching-based solution is highly efficient and feasible for an online taxi sharing service. Further, the procedures for inserting/deleting nodes and corresponding links in the shareability network are also very efficient, with a running time that increases only linearly with the number of nodes in the network.

\subsection*{Oracle and Online model}

Trip sharing opportunities are investigated according to two different models, called the {\em Oracle} and the {\em Online} model. The difference between the two models lies in the set of links in the shareability network that are considered for computing the maximum matching. In the Oracle model, all possible links between shareable trips as determined by spatial and temporal conditions are retained. Thus, two trips $T_1, T_2$ are considered to be shareable also if their starting times are separated by a long time interval (say, $30~\mathrm{min}$), as long as the two trips can be combined without imposing delays exceeding $\Delta$ on both trips. Note that this is in principle possible if, say, $T_1$ is a very long trip, and the pickup location of $T_2$ is close to the trajectory of $T_1$. Since in this paper we consider only static trip sharing, combining trips in a situation like the one described above would require the presence a reservation system, in which requests for taxi trips are issued well ahead of time (e.g., at least $30~\mathrm{min}$ in the example above). Since in many cases a reservation system is not present or not allowed by regulations (as it is the case in the city of NY), we have introduced also the Online model, which can be easily turned into a practical real time, on-demand taxi system. The idea is to reduce the number of links in the shareability network, by filtering out trip sharing opportunities for trips $T_1,T_2$ whose starting times are more than $\delta$ apart. In other words, we retain in the shareability network only trip sharing opportunities that can be exploited with a real-time, on-demand taxi system.

\subsection*{Computing travel times}
Knowledge of estimated travel times between arbitrary origin/destination in the road map is a pre-requisite for checking the trip sharing conditions, and, hence, to build the shareability network. Since we cannot use directly GPS taxi traces for this purpose due to lack of trajectory and speed information in the data set, we designed a travel time estimation heuristic starting from the pickup/drop off times recorded in the data set.

Given is a set of actually performed trips $\mathcal{T}=\{T_1,\dots,T_k\}$, where each trip $T_i=(o_i,d_i,tt_i)$ is defined by an origin location $o_i$, a destination location $d_i$, and a travel time $tt_i$. While in the original data set origin and destination of a trip are defined as raw GPS $(lat,lon)$ coordinates, in the following we assume origin and destination of a trip are taken from the set $\mathcal{I}$ of street intersections in the road map. To convert raw GPS coordinates into an intersection in $\mathcal{I}$, we associate $o_i$ (or $d_i$) to the closest intersection based on geodesic distance, subject to the condition that the distance to the closest intersection is below a threshold such as twice the average GPS accuracy, set to $100\,\mathrm{m}$. Thus, in the following we assume $o_i$ and $d_i$ are indeed distinct elements of the set $\mathcal{I}$ of possible intersections in the road map, i.e., $\forall T_i\in\mathcal{T},o_i,d_i\in\mathcal{I}$. We also define the set $\mathcal{S}=\{S_1,\dots,S_h\}$ of {\em streets} as the set of all road segments connecting two adjacent intersections in the road map.

Given the trip set $\mathcal{T}$ as defined above, the problem to solve is estimating the travel time $x_i$ for each street $S_i\in\mathcal{S}$, in such a way that the average relative error (computed across all trips) between the actual travel time $tt_i$ and the estimated travel time $et_i$ for trip $T_i$ computed starting from the $x_i$s (compound with a routing algorithm) is minimized\footnote{Formally, the average relative error is defined as $\epsilon_i=|tt_i-te_i|/tt_i$.}. Once error minimizing travel times for each street in $\mathcal{S}$ are determined, the travel time between any two intersections $I_i, I_j\in\mathcal{I}$ can be computed starting from the $x_i$s, using a routing algorithm that minimizes the travel time between any two intersections. Besides the trip set $\mathcal{T}$, we are also given the array $Le=(l_i)$ of the lengths of the streets in $\mathcal{S}$. 

In the following, we define the problem at hand more formally. First, we partition the trip set in time sliced subsets $\mathcal{T}_1,\dots,\mathcal{T}_{24}$, where subset $\mathcal{T}_i$ contains all trips whose starting time is in hour $i$ of the day. Finer partitioning (e.g., per hour and weekday, per hour and weekday and month, etc.) is possible, if needed. In the following, to simplify notation, we re-define $\mathcal{T}$ as any of the time-sliced subsets $\mathcal{T}_i$. In fact, the travel time estimation process can be performed independently on each of the time-sliced trip subsets. When a time-sliced trip set $\mathcal{T}$ is considered, classes $\mathcal{T}^1,\dots,\mathcal{T}^h$ of {\em equivalent trips} are formed, where two trips $T_u,T_v$ are said to be equivalent if and only if $(o_u=o_v)\wedge(d_u=d_v)$. Notice that, under the assumption that the routing algorithm is deterministic (i.e., it always computes the same route given the same starting and ending intersections $I_i$ and $I_j$), the set of streets in the optimal route from origin to destination is the same for any two trips $T_u,T_v\in\mathcal{T}^{i,j}$, where $\mathcal{T}^{i,j}$ is the class of trips with origin $I_i$ and destination $I_j$. Thus, all the trips in equivalence class $\mathcal{T}^{i,j}$ can be considered as multiple samples of the travel time on the same set of streets. All trips in $\mathcal{T}^{i,j}$ are then replaced by a single trip $T_{i,j}$ with corresponding origin and destination, and travel time $\bar{tt}_{i,j}$ equal to the average of the travel times of all trips in $\mathcal{T}^{i,j}$. After this step is performed for all equivalence classes, we are left with an aggregate set $\mathcal{T}_{\mathrm{agg}}$ of singleton, non-equivalent trips $T_{i,j}$, and corresponding travel times $\bar{tt}_{i,j}$. 

The travel time estimation heuristic is reported below. Initially, trips are filtered to remove ``loop" trips (i.e., trips with the same origin and destination), as well as excessively ``short" or ``long" trips. After a step in which initial routes are computed using a pre-selected initial speed $v_{\mathrm{init}}$ (the same for all streets), a second trip filtering step is performed, in which  excessively ``fast" and ``slow" trips are removed from the travel time estimation process. The rationale for this filtering is removing ``noisy" data which could have been resulted from very specific conditions (say, a snowstorm could have caused many slow trips). Including ``noisy" data in the travel time estimation process would bias the estimation process to partially compensate for ``noisy" trips, increasing the error experienced in the remaining portion of trips.

An iterative process is started after the second trip filtering step. The iterative process is composed of two nested iterations. In the outer iteration, new routes for the trips are computed based on the updated travel time estimation of street segments. After routes are computed, new trip travel time estimations are determined, and the average relative error across all trips is computed. Furthermore, an offset value is computed for each street segment, indicating whether travel times of all trips in which a street segment is included are under- or over-estimated. Then, an inner loop is started, with the purpose of refining street travel time estimations based on the computed offset values: an increase/decrease step {\tt k} is initialized, and used to tentatively change street travel time estimates based on the offset value (tentative updated trip travel times are accepted only if the resulting average speed $v$ on the trip is such that $0.5\,\mathrm{m}/\mathrm{sec}<v<30\,\mathrm{m}/\mathrm{sec}$). The tentative estimations are accepted if the newly computed average relative error is decreased with respect to the current value. Otherwise, another iteration of the inner loop is started with a smaller value of {\tt k}. This process is repeated until either the street travel time estimations are updated, or the value of {\tt k} has reached a specified minimum value. The outer iterative process terminates when there is no updated street travel time estimation after the execution of the inner loop.

After the iterative process, the algorithm has produced a travel time estimation for each street included in at least one optimal route for at least one trip in the data set (set $\mathcal{S}_{\mathrm{trip}}$).
The travel time for the remaining streets is then computed according to a simple heuristic: the travel time for each street $s$ having an intersection in common with at least one street in $\mathcal{S}_{\mathrm{trip}}$ is estimated based on the average speed estimated in adjacent streets, i.e., streets $s'$ such that $s$ and $s'$ share an intersection. This process is repeated until the travel time on all streets can be estimated. Finally, at step 7 the travel time between any two possible intersections $I_i,I_j$ in the street map is computed by first computing the optimal route between $I_i$ and $I_j$ using Dijkstra algorithm with the estimated trip travel times, and then computing the travel time by summing up the travel time of the streets in the optimal route. Notice that we use the Dijkstra algorithm \cite{Cormen90}
 (repeated $|\mathcal{I}|^2$ times) to compute all-to-all shortest paths instead of the classical Floyd-Warshall algorithm since the graph corresponding to the street network is very sparse. Thus, repeating  $|\mathcal{I}|^2$ times the Dijkstra algorithm yields $O(|\mathcal{I}|^2\log |\mathcal{I}|)$, which is lower than the $O(|\mathcal{I}|^3)$ complexity of Floyd-Warshall.

The travel time estimation algorithm has been executed on the set of about 150 millions trips performed in New York City during weekdays, in year 2011. The performance of the travel time estimation algorithms for the 24 trip classes (corresponding to time of day) is summarized in Table~S1. The table reports the average relative error computed on all trips retained after the filtering steps, the percentage of trips retained in the data set after filtering, and the number of streets included in at least one optimal route. As seen from the table, the algorithm provides travel time estimations incurring an average relative error of $15\%$. The vast majority of trips is retained in the data set after filtering (more than $97\%$ on the average). Furthermore, the vast majority of street segments are included in at least one optimal route: considering that the total number of (directed) street segments in Manhattan is 9452, on  average $91.7\%$ of the streets are included in at least one optimal route. For the remaining streets, step 6 of the algorithm is used to estimate street travel time.

To study the travel speeds estimated by our algorithm we calculated travel speeds across different times of the day. The travel time estimations are reasonable, with a relatively lower average speed of around $5.5\,\mathrm{m/sec}$ estimated during rush hours (between 8am and 3pm), and peaks around $8.5\,\mathrm{m/sec}$ at midnight. Further evidence for a reasonable estimation is highlighted also by Fig.~S2, reporting the estimated travel speed on each street segment at four different times of day: 0am, 8am, 4pm, and 22pm. As expected, travel speeds tend to reduce during daytime. Also, the algorithm is able to faithfully model the higher speed on highways (on the left-hand side of Manhattan).

\begin{center}
{\begin{minipage}{170mm}\hrulefill
\tiny
\begin{tabbing}
Alg\=or\=ithm\=~fo\=r tr\=avel time estimation\\
{\em Input}:\>\>\>the (sub)set $\mathcal{T}$ of performed trips;the set $\mathcal{I}$ of intersections; \\
\>\>\> the set $\mathcal{S}$ of streets; the vector $Le$ of lengths for streets in $\mathcal{S}$\\
{\em Output}: \>\>\>a travel time estimation matrix $ET(i,j)$, where $et_{i,j}$ is the estimated\\
\>\>\>time for traveling from intersection $I_i$ to intersection $I_j$\\
1.\> {\bf Equivalent trip reduction}\\
\>\> - group in class $\mathcal{T}^{i,j}$ all trips $T_u$ such that $o_u=I_i$ and $d_u=I_j$\\
\>\>- for each class $\mathcal{T}^{i,j}$, replace all trips in $\mathcal{T}^{i,j}$ with a single trip $T_{i,j}$ with $o_{i,j}=I_i$, $d_{i,j}=I_j$,\\
\>\> and $tt_{i,j}=\bar{tt}_{i,j}=\frac{\sum_{T_u\in\mathcal{T}^{i,j}}tt_u}{|\mathcal{T}^{i,j}|}$\\
\>\>- let $\mathcal{T}_{\mathrm{agg}}$ be the collection of trips $T_{i,j}$\\ 
2.\> {\bf First trip filtering}\\
\>\> - for each $T_{i,j}\in \mathcal{T}_{\mathrm{agg}}$, remove $T_{i,j}$ from $\mathcal{T}_{\mathrm{agg}}$ if $i=j$ //{\em remove ``loop" trips}\\
\>\> - for each $T_{i,j}\in \mathcal{T}_{\mathrm{agg}}$, remove $T_{i,j}$ from $\mathcal{T}_{\mathrm{agg}}$ if $(\bar{tt}_{i,j}<2min)$ or $(\bar{tt}_{i,j}>1h)$ //{\em remove ``short" and ``long" trips}\\
3.\> {\bf Initial route computation}\\
\>\> - for each $S\in\mathcal{S}$, set same initial speed $v_S=v_{\mathrm{init}}$; set travel time to $t_S=\frac{L(S)}{v_S}$\\
\>\> - for each $T_{i,j}\in\mathcal{T}_{\mathrm{agg}}$, compute optimal route $I_i\to I_j$ using Dijkstra algorithm\\
\>\> - for each $T_{i,j}\in\mathcal{T}_{\mathrm{agg}}$, let $\mathcal{S}^{i,j}=\{S^{i,j}_1,\dots,S^{i,j}_k\}$ be the set of streets in the optimal route for $T_{i,j}$\\
4.\> {\bf Second trip filtering}\\
\>\> - for each $T_{i,j}\in \mathcal{T}_{\mathrm{agg}}$, compute the average speed $as_{i,j}=\frac{\sum_h L(S_h^{i,j})}{\bar{tt}_{i,j}}$\\ 
\>\> - for each $T_{i,j}\in \mathcal{T}_{\mathrm{agg}}$, remove $T_{i,j}$ from $\mathcal{T}_{\mathrm{agg}}$ if $(as_{i,j}<0.5\,\mathrm{m}/\mathrm{sec})$ or $(as_{i,j}>30\,\mathrm{m}/\mathrm{sec})$ //{\em remove ``slow" and ``fast" trips}\\
5.\> {\bf Iterative steps}\\
\>\>5.1 set {\tt again}={\bf true}\\
\>\>5.2 {\bf while} {\tt again} {\bf do}\\
\>\>\> - {\tt again}={\bf false}\\
\>\>\>- for each $T_{i,j}\in\mathcal{T}_{\mathrm{agg}}$, compute optimal route $I_i\to I_j$ using Dijkstra algorithm\\
\>\>\>- for each $T_{i,j}\in\mathcal{T}_{\mathrm{agg}}$, let $\mathcal{S}^{i,j}=\{S^{i,j}_1,\dots,S^{i,j}_k\}$ be the set of streets in the optimal route for $T_{i,j}$\\
\>\>\>- for each $T_{i,j}\in\mathcal{T}_{\mathrm{agg}}$, compute $et_{i,j}=\sum_{S\in\mathcal{S}^{i,j}}t_S$ //{\em trip travel time estimation}\\
\>\>\>- let $\mathcal{S}_{\mathrm{trip}}=\bigcup_{\mathcal{T}^{i,j}\in\mathcal{T}_{\mathrm{agg}}}\mathcal{S}^{i,j}$\\
\>\>\>- {\tt RelErr}=$\sum_{T_{i,j}\in\mathcal{T}_{\mathrm{agg}}}\frac{|et_{i,j}-\bar{tt}_{i,j}|}{\bar{tt}_{i,j}}$\\
\>\>\>- for each $S\in \mathcal{S}_{\mathrm{trip}}$, let $T_S=\{T_{i,j}\in\mathcal{T}_{\mathrm{agg}}|S\in\mathcal{S}^{i,j}\}$ //{\em set of trips including} $S$ {\em in the current route}\\
\>\>\> - for each $S\in \mathcal{S}_{\mathrm{trip}}$, compute $O_S=\sum_{T_{i,j}\in T_S}(et_{i,j}-\bar{tt}_{i,j})\cdot |\mathcal{T}_{i,j}|$ //{\em offset computation}\\
\>\>\> - {\tt k}=1.2\\
\>\>\> 5.3 {\bf while true do}\\
\>\>\>\> - for each $S\in \mathcal{S}_{\mathrm{trip}}$, do the following\\
\>\>\>\>\> - if $O_S<0$, then $t_S=k\cdot t_S$; else $t_S=\frac{t_S}{k}$ //{\em street travel time estimate is increased/reduced based on offset}\\
\>\>\>\> - for each $T_{i,j}\in\mathcal{T}_{\mathrm{agg}}$, compute $et'_{i,j}=\sum_{S\in\mathcal{S}^{i,j}}t_S$ //{\em tentative updated trip travel time estimation}\\
\>\>\>\> - {\tt NewRelErr}=$\sum_{T_{i,j}\in\mathcal{T}_{\mathrm{agg}}}\frac{|et'_{i,j}-\bar{tt}_{i,j}|}{\bar{tt}_{i,j}}$ //{\em compute new relative error}\\
\>\>\>\> - if {\tt NewRelErr}$<${\tt RelErr} then do the following //{\em new estimates better than previous ones}\\ 
\>\>\>\>\> - for each $T_{i,j}\in\mathcal{T}_{\mathrm{agg}}$, $et_{i,j}=et'_{i,j}$ //{\em update travel time estimates}\\
\>\>\>\>\> - {\tt RelErr}={\tt NewRelErr}\\
\>\>\>\>\> - {\tt again}={\bf true}; {\bf goto} step 5.2 //{\em perform another iteration}\\
\>\>\>\> - else // {\em new estimates worse than previous ones}\\ 
\>\>\>\>\> {\tt k}=$1+(${\tt k}$-1)\cdot 0.75$ //{\em reduce the street travel time increase/decrease step}\\
\>\>\>\>\> if {\tt k}$<1.0001$ then exit from loop at step 5.3 //{\em if} {\tt k} {\em is too small, exit from inner loop}\\
\>\>\>\>\> else {\bf goto} step 5.3 //{\em otherwise, perform another iteration with smaller} {\tt k}\\
6. \> {\bf Computation of estimated travel time for remaining streets}\\
\>\>- $\mathcal{ES}=\mathcal{S}_{\mathrm{trip}}$; $\mathcal{NS}=\mathcal{S}-\mathcal{S}_{\mathrm{trip}}$\\
\>\>- let $N(S)$ be the set of streets sharing an intersection with street $S$\\
\>\> - for each $S_i\in \mathcal{NS}$ compute $n_{S_i}=|N(S_i)\cap\mathcal{ES}|$\\
\>\> - order the streets in $\mathcal{NS}$ in decreasing order of $n_{S_i}$\\
\>\> - for each $S_i\in \mathcal{NS}$ in the ordered sequence\\
\>\>\> $v_{S_i}=\frac{\sum_{S_j\in N(S_i)\cap\mathcal{ES}}v_{S_j}}{|N(S_i)\cap\mathcal{ES}|}$; $t_{S_i}=\frac{L(S_i)}{v_{S_i}}$; $\mathcal{ES}=\mathcal{ES}\cup\{S_i\}$;$\mathcal{NS}=\mathcal{NS}-\{S_i\}$\\
\>\>- repeat above step until $\mathcal{NS}=\emptyset$\\
6.\> {\bf Travel time estimation}\\
\>\>-for each possible pair of intersections $(I_i,I_j)$, compute optimal route $I_i\to I_j$ \\
\>\>using Dijkstra algorithm with estimated travel time $t_S$ for each street $S$\\
\>\>- let $\mathcal{S}^{i,j}$ be the set of streets in the optimal route for $(I_i,I_j)$\\
\>\>- $ET(i,j)=\sum_{S_h\in\mathcal{S}^{i,j}}\frac{L(S)}{v_S}$\\
\>\>- {\bf return} $ET$
\end{tabbing}
\vspace*{-3mm}
\hrulefill
\end{minipage}}\\[10pt]
Travel time estimation algorithm.
\end{center}

\clearpage

\subsection*{Robustness of day of week (Oracle model)}
To assess whether there is any noticeable difference in terms of trip sharing opportunities between weekend and week days, we have repeated the analysis above for the 104 weekend days. There is no major difference in terms of trip sharing opportunities in weekend versus weekdays. However, the average number of trips per day during weekend days is about $17\%$ lower than that during week days ($\approx350K$ versus $\approx418K$ trips per day). Only minimal differences in total trip travel time savings between week and weekend days are observed, with slightly better savings achieved during weekdays. As shown next, this is due to the strong relation between trip sharing opportunities and the number of performed trips.

\subsection*{Shareable trips versus trips per day (Oracle model)}
To better understand the relationship between trip sharing opportunities and number of trips performed in a day, we have ordered the days for increasing number of performed trips, and plotted the corresponding percentage of shared trips in the day. The resulting plot is reported in Fig.~3C in the main text. Typical days in New York City feature around 400,000 trips with almost near maximum shareability. Days with a small number of trips are rare and happen mostly during special events. For example, the most noticeable drop in trip sharing opportunities occurs on day 240, August 28th 2011, during which hurricane Irene hit New York City. On this (weekend) day, only about $26,500$ trips were performed, and trip sharing opportunities dropped to about $87\%$ when $\Delta=2\,\mathrm{min}$. As such, data points below 300,000 trips per day are too sparse to make statistically reasonable assessments. Hence we have generated additional low density situations by subsampling our dataset, randomly removing various fractions of vehicles from the system in the following way: For each day in the data set, we randomly selected a percentage $c$ of the taxis in the trace, and deleted the corresponding trips from the data set. We varied $c$ from $95\%$ down to $1\%$, generating a number of trips per day as low as 1,962. 

To the set of resulting shareability values we have fit a saturation curve of the form $f(x) = \frac{Kx^n}{1+Kx^n}$, where $K$ and $n$ are two (non-integer) parameters, Fig.~3C in the main text. Curves of this form appear in the well-known Hill equation in biochemistry, describing saturation effects in the binding of ligands to macromolecules and in similar processes \cite{hill1910pea}.
 Fits to both the shared trip maximization and time minimization conditions match very well (for both, $R^2 > 0.99$), we used a standard Levenberg-Marquardt algorithm for obtaining least squares estimates. The best fit parameters read $K=1.1 \times 10^{-4}$, $n=0.92$ for time minimization, and $K=1.5 \times 10^{-6}$, $n=1.39$ for shared trip maximization. Since for time minimization we have $n \approx 1$, the fit works here almost as well (again $R^2 > 0.99$) with the functional form $f(x) = \frac{Kx}{1+Kx}$ that has only the one parameter $K$, known as the Langmuir equation \cite{langmuir1916cfp}, with $K=4.4 \times 10^{-5}$. The Langmuir equation describes the relationship between the concentration of a gas (or compound) adsorbing to a solid surface (or binding site) and the fractional occupancy of the surface. Since an increasing density of taxis -- the ``particles'' -- implies that more trip pairs -- the ``surface'' -- can be covered, the Langmuir equation can thus be seen as an analogy to the saturation effects in shareability if a homogeneous distribution of trips and taxis is assumed. The second parameter $n$ that appears in the Hill equation is used as a measure for cooperative binding in enzyme kinetics: If $n>1$, an enzyme which has already a bound ligand increases its affinity for other ligand molecules. It is unclear if the analogy can be stretched to understand why $n=1.39$ works best for shared trip maximization. In any way, the fast, hyperbolic saturation implies that taxi sharing could be effective even in cities with taxi vehicle densities much lower than New York, and in case of low market penetration of the sharing system a high return on investment.

\subsection*{Increasing the number of shared trips}

We next investigate what happens when we increase the number $k$ of sharable trips from 2 to 3. We remark that the computational complexity of the trip matching task with $k=3$ is orders of magnitude higher than the same task with $k=2$, for the following reasons:
\begin{itemize}
\item[--] The computation of the shareability hyper-network is challenging. In fact, we now have to compare triplets, instead of pairs, of candidate trips. For each triplet, we have 60 possible valid routes connecting the three sources/destinations of the trips, instead of 4 possible routes with $k=2$. For each valid route, we then have to check whether the trips can actually be shared, meaning that the computational time for calculating the shareability network with $k=3$ is at least 15 times higher than that needed to compute the trip sharing graph with $k=2$.
\item[--] We now have to solve a matching problem on hyper-networks, instead of on simple networks. While matching on graphs can be solved in polynomial time, matching on general hyper-networks belongs to the class of NP-hard problems, i.e. problems that are ``difficult to solve". To get around this computational challenge, we use a greedy, polynomial-time heuristic that first builds the maximum matching considering only triplets of trips, then applies standard matching on the remaining trips. This heuristic is known to build a solution which is, in the worst-case, within a constant factor from the optimal solution.
\end{itemize}

To tackle these computational challenges, we computed the number of shared trips and the fraction of saved travel time only in the Online model, and for selected days of the year. Notice that in the Online model trips can be shared only when their starting times are within a temporal window of $\delta$, thus significantly reducing the number of candidate trips for sharing (and, hence, computational time needed to compute the shareability hyper-network) with respect to the Oracle model in which also trips with starting times in time windows larger than $\delta$ can be shared.

We first present the results referring to a day (day 300) in which about $450,000$ trips were performed, which is about the average number of trips per day recorded in our data set. Figures 2D and E in the main text report the percentage of saved taxi trips and of saved travel time as a function of the quality of service parameter $\Delta$, when the time window parameter $\delta$ is set to $1\,\mathrm{min}$. As seen from the figure, increasing the number of sharable trips provides some benefit only when the quality of service parameter $\Delta$ is large enough for such trips taxi sharing opportunities to become available. This value of $\Delta$ is approximately equal to $150\,\mathrm{sec}$. For larger values of $\Delta$, the advantage of triple trip sharing versus double trip sharing becomes perceivable. When $\Delta=300\,\mathrm{sec}$, the number of saved taxi trips is increased from about $50\%$ with $k=2$ to about $60\%$ with $k=3$. While with $k=2$ nearly all trips can be shared, resulting in about halving the number of performed trips, relatively less trips can be combined in a triple trips when $k=3$. In fact, the achieved percentage of saved trips with $k=3$ is $60\%$, which is below (but not too much) the percentage of $66.6\%$ that would result if all trips would be shared in a triple trip.

Similarly to the percentage of saved trips, also when the percentage of total traveled time is considered the difference between double and triple trip sharing becomes perceivable only for values of $\Delta\ge150\,\mathrm{sec}$. Increasing the number of shared trips from 2 to 3 allows a further saving of about $10\%$ in terms of total traveled time, which is achieved when $\Delta=300\,\mathrm{sec}$. 

Figure~S3 reports the percentage of saved taxi trips and of saved total travel time as a function of the time window parameter $\delta$, for two specific values of $\Delta$. While increasing $\delta$ beyond $120\,\mathrm{sec}$ provides little benefits in terms of saved taxi trips when $k=2$, we still can observe some benefit for $\delta>120\,\mathrm{sec}$ with $k=3$. This is due to the fact that with $\delta=120\,\mathrm{sec}$ we already obtain near-ideal performance when $k=2$, corresponding to halving the number of trips. When $k=3$, there is more room for improvement, and a near-ideal performance is approached only when $\delta=180\,\mathrm{sec}$ and $\Delta=5\,\mathrm{min}$. This means that all triple trip sharing opportunities can be exploited only with relatively ``patient" taxi customers. Concerning percentage of saved travel time, we observe that triple trip sharing achieves as far as $45\%$ saving, which is significantly higher than that achieved by double trip sharing. However, similar to the percentage of taxi trips, such high savings can be obtained only with relatively ``patient" taxi customers.

Finally, we compare the potential of triple versus double taxi trip sharing in a relatively less crowded day (day 250), when only $\approx 250,000$ trips were performed. Figure~S3 reports the achieved reduction in number of taxi trips and in total travel time as a function of $\delta$, when $\Delta=5\,\mathrm{min}$. While with $k=2$ near-ideal taxi sharing can be achieved also with low taxi traffic, with $k=3$ a higher number of taxi trip requests is needed to fully exploit the potential of triple trip sharing. The situation is different in terms of saved total travel time, which is consistently benefiting from a higher number of taxi requests for both double and triple trip sharing.

Summarizing, based on the analysis above we can state that triple trip sharing does provide substantial benefits versus double trip sharing, but for this to occur we need a reasonable number of taxi requests, and relatively ``patient" taxi customers, for which waiting for a few minutes at taxi request time and upon arrival at destination is acceptable. With ``impatient" customers, double trip sharing is much more effective than triple trip sharing: it is computationally efficient, and provides nearly the same performance as triple trip sharing. Since the benefits to the community in terms of reduced number of taxis and reduced pollution with triple trip sharing versus double trip sharing are considerable, an interesting question raised by our analysis is whether the New York City municipality can design a fare system that motivates customers to be ``patient".


\clearpage


\begin{table}[ht]
\centering
\small
\begin{tabular}{|c|c|c|c|}
\hline
{\bf Hour} & {\bf Avg. Rel. Error} & $\%$ {\bf trips after filtering} &$|\mathcal{S}_{\mathrm{trip}}|$ \\ 
\hline
0 & 0.1541 & 94.91 & 8582\\
1 & 0.1301 & 96.64 & 8664\\
2 & 0.1433 & 97.76 & 8781\\
3 & 0.1463 & 98.12 & 8673\\
4 & 0.1438 & 98.16 & 8707\\
5 & 0.1448 & 98.17 & 8443\\
6 & 0.1517 & 98.04 & 8485\\
7 & 0.1535 & 98.04 & 8554\\
8 & 0.1560 & 98.01 & 8600\\
9 & 0.1541 & 97.97 & 8596\\
10 & 0.1568 & 97.96 & 8629\\
11 & 0.1644 & 97.85 &8650\\
12 & 0.1639 & 97.97 & 8833\\
13 & 0.1547 & 98.12 & 8820\\
14 & 0.1486 & 98.20 & 8622\\
15 & 0.1553 & 98.19 & 8866\\
16 & 0.1388 & 98.13 & 8687\\
17 & 0.1486 & 98.05 & 8853\\
18 & 0.1457 & 97.86 & 8845\\
19 & 0.1540 & 97.61 & 8718\\
20 & 0.1602 & 97.23 & 8698\\
21 & 0.1649 & 96.85 & 8600\\
22 & 0.1693 & 96.54 & 8641\\
23 & 0.1799 & 95.78 & 8578\\
\hline
{\bf avg} & 0.1534 & 97.59 & 8671.9\\
\hline
\end{tabular}
\caption{Summary of travel time estimation performance.\label{TravelTimeRes}}\vspace{-5mm}
\end{table}


\begin{figure} [ht]
\centering
\includegraphics[width=0.7\textwidth]{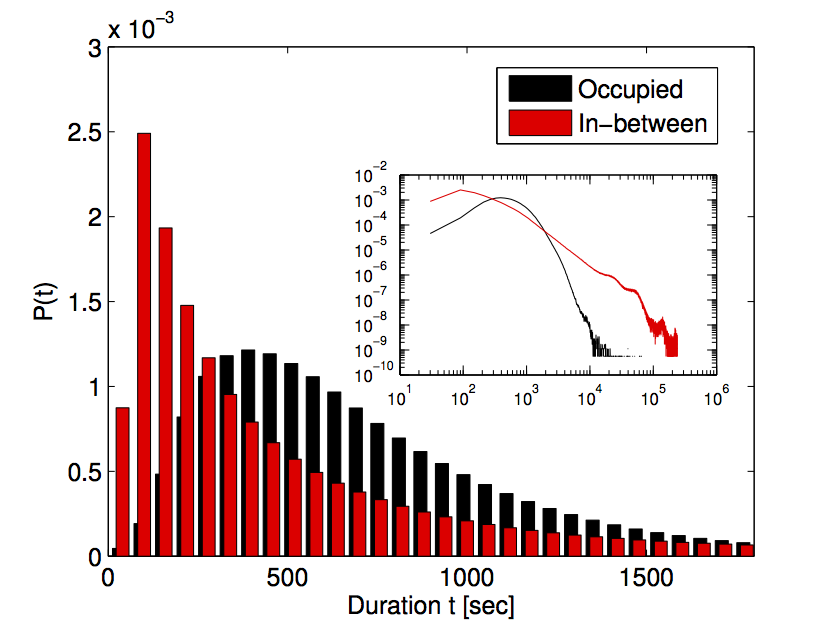}
\caption{Empirical probability distribution of durations of occupied trips, and of the time spans in-between the occupied trips which comprise both empty trips and all activities where taxis are not being used for passenger transport such as shift changes, lunch breaks, vehicle maintenance. Specific data describing the empty trips alone is not available. The inset shows the same two histograms in log-log scale for all measured values.
The in-between durations peak below two minutes, much lower than the durations of occupied trips which peak at around six minutes, showing that taxis tend to find new passengers relatively quickly.} \label{fig:durationdistrib}
\end{figure}

\begin{figure} [ht]
\centering
\includegraphics[width=0.7\textwidth]{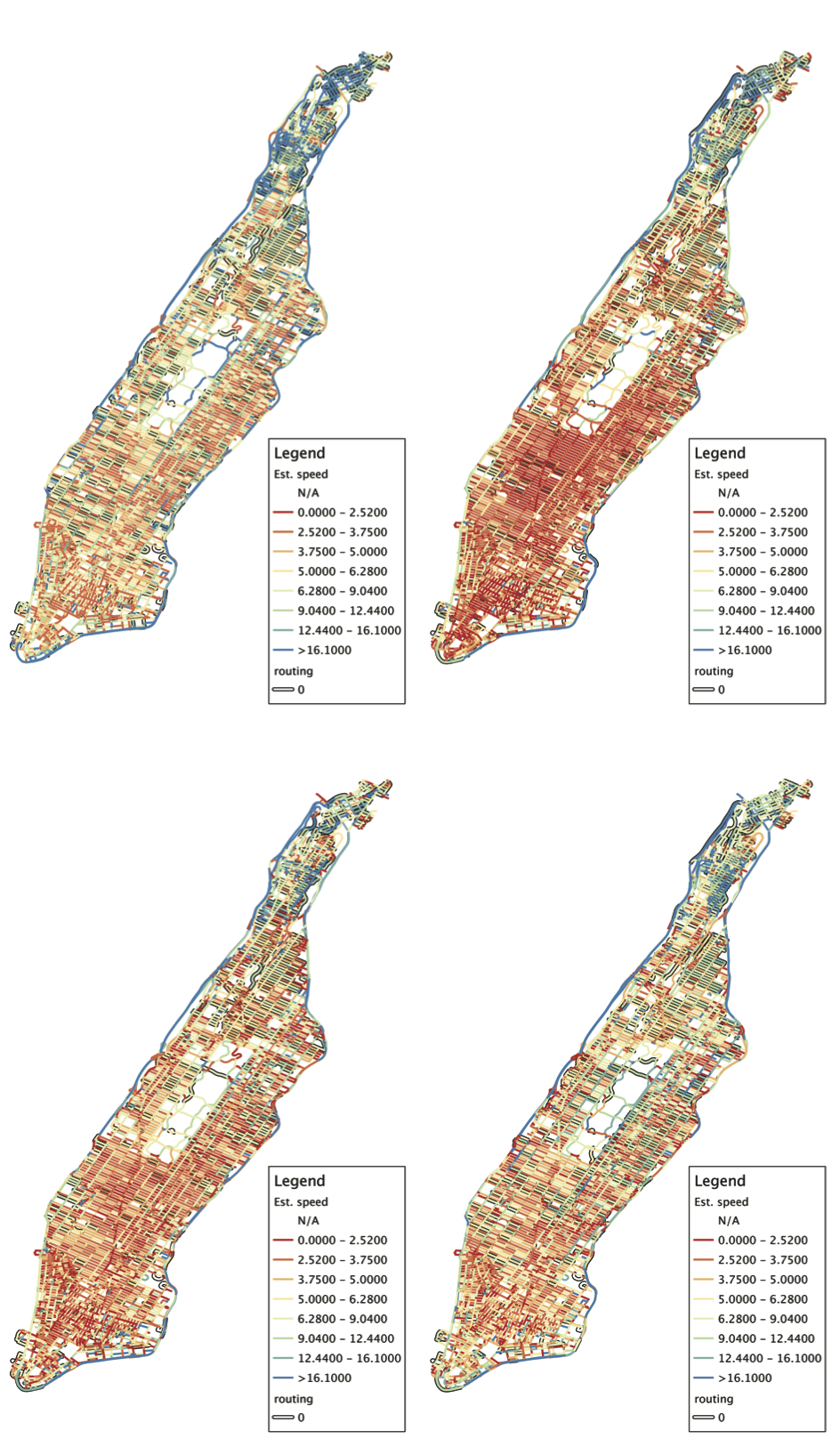}
\caption{Estimated speed map at 0am (top left) and at 8am (top right). Estimated speed map at 4pm (bottom left) and 10pm (bottom right). Travel time for streets in bold (routing) is estimated at step 6 of the algorithm.} \label{speedMap08}
\end{figure}

\begin{figure} [ht]
\centering
\includegraphics[width=0.9\textwidth]{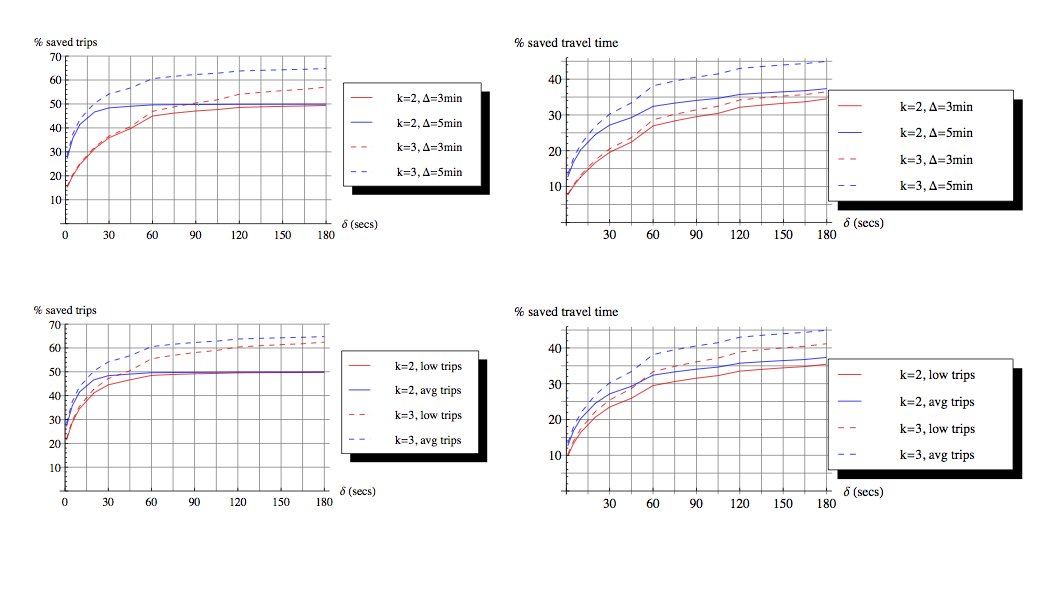}
\caption{Percentage of saved taxi trips (top left) and percentage of saved travel time (top right) in New York City as a function of $\delta$ in the Online model. The quality of service parameter is set to $\Delta=3\,\mathrm{min}$ and $\Delta=5\,\mathrm{min}$. Percentage of saved taxi trips (bottom left) and percentage of saved travel time (bottom right) as a function of $\delta$ in the Online model, in a day with relatively low taxi traffic (``low"), and with average taxi traffic (``avg"). The quality of service parameter is set to $\Delta=5\,\mathrm{min}$. Each plot reports two curves: one referring to the case where at most two trips can be shared ($k=2$), and one referring to the case where at most three trips can be shared ($k=3$).} \label{varKdelta}
\end{figure}

\end{document}